   \newtheorem{lemma}{Lemma}
   \newtheorem{theorem}{Theorem}
  \titlespacing\subsection{0pt}{0pt plus 1pt minus 2pt}{0pt plus 1pt minus 2pt}
   \titlespacing\section{0pt}{5pt plus 1pt minus 2pt}{5pt plus 1pt minus 2pt}
\begin{document}

\title{On the Performance of Large-Scale Wireless Networks in the Finite Block-Length Regime}
\author{
\IEEEauthorblockN{Nourhan Hesham and Anas Chaaban}
\IEEEauthorblockA{School of Engineering, University of British Columbia, Kelowna, BC V1V1V7, Canada\\
Email: \{nourhan.soliman,anas.chaaban\}@ubc.ca
}
\thanks{%
This publication is based upon work supported by King Abdullah University of Science and Technology (KAUST) under Award No. OSR-2018-CRG7-3734.
}
}

\maketitle
\begin{abstract}
 Ultra-Reliable Low-Latency Communications have stringent delay constraints, and hence use codes with small block length (short codewords). In these cases, classical models that provide good approximations to systems with infinitely long codewords become imprecise. To remedy this, in this paper, an average coding rate expression is derived for a large-scale network with short codewords using stochastic geometry and the theory of coding in the finite blocklength regime. The average coding rate and upper and lower bounds on the outage probability of the large-scale network are derived, and a tight approximation of the outage probability is presented. Then, simulations are presented to study the effect of network parameters on the average coding rate and the outage probability of the network, which demonstrate that results in the literature derived for the infinite blocklength regime overestimate the network performance, whereas the results in this paper provide a more realistic performance evaluation.  
\end{abstract}

\begin{IEEEkeywords}
Stochastic Geometry; Large-Scale Network; Capacity; Outage Probability; Finite Blocklength; URLLC.
\end{IEEEkeywords}

\section{Introduction}
The density of cellular networks has increased significantly from 2G up to 5G, and continues to increase in order to serve a larger number of users/devices, and provide wider coverage and higher data speeds. Additionally, current and future networks are expected to support a multitude of connectivity requirements, including the Internet-of-Things (IoT) and Machine-Type communications (MTC). Many such applications have stringent delay requirements, which necessitate using different approaches in studying performance \cite{BennisDebbahPoor2018}. Recent works on Ultra-Reliable Low-Latency Communications (URLLC) focused on this topic, investigating low-latency communications from different perspectives \cite{EEE,HouSheLiVucetic2019,
HindiElayoubiChahed,Park2019,ChaabanSezgin2016}. 

Large-scale networks can be studied using stochastic geometry (SG) tools \cite{sawy}. SG is the study of random spatial patterns. It is a strong tool used for interference modeling in large-scale networks, and has been used to study several network performance metrics in the literature~\cite{paper2_SG,paper3_SG,A,B}. In~\cite{paper2_SG}, the wireless network is modelled using SG as the Poisson point process (PPP). This led to results on the connectivity, the capacity, the outage probability, and other fundamental limits of wireless networks. In~\cite{paper3_SG}, the coverage probability of cellular networks in urban areas modelled as a PPP is provided. The results are compared to the hexagonal grid model to find that the SG model provides a more accurate upper bound on the coverage probabilities than when using the hexagonal grid model. In~\cite{A,B}, large-scale networks using non-orthogonal multiple-access have been analyzed using SG. Note that works in this area commonly use Shannon's channel capacity expression \cite{Shannon} to study performance which is not suitable for delay-limited applications. 

Aggregate interference modeling and performance characterization were active research topics for decades. Due to the difficulty in modeling aggregate interference, many papers provided approximations to be able to reach a tight approximation to the capacity, symbol error probability, outage probability, etc. A common method is to approximate the aggregate interference as a Gaussian random variable as in~\cite{EID,sawy,gaussian3}. In~\cite{EID}, the aggregate interference was approximated as a sum of Gaussian random variables with random scaling. In~\cite{sawy}, the authors modified the work in~\cite{EID} to approximate the aggregate interference as a single Gaussian random variable with random scaling giving the same results as in~\cite{EID}. Moreover in~\cite{gaussian3}, the authors provide the kurtosis of the interference distribution for different values of exclusion regions which shows that the interference tends to be Gaussian for large exclusion regions. As a conclusion from different papers, the Gaussian approximation is a valid approximation for dense networks or large exclusion regions. 

Since that delay-constrained applications require the use of short codes. Studying the achievable information rate in such applications using Shannon's capacity expression becomes imprecise, as this expression is derived for infinitely long codes and vanishing error probability. Instead, the coding rate under a codelength limitations has to be used for such studies.
In~\cite{polyanski}, Polyanskiy {\it et al.} proposed tight bounds on the maximal channel coding rate achievable for a given blocklength regime (short codewords) for different types of channels  such as the binary symmetric channel (BSC), binary errasure channel (BEC), and additive white gaussian noise (AWGN) channel. Moreover, in~\cite{BlockFadingPolyankiy}, Polyankiy {\it et al.} extended~\cite{polyanski} to include the maximal achievable coding rate over block-fading channels in a finite blocklength regime. These works were extended to studying the coding rate in the finite blocklength regime of other scenarios, including relaying~\cite{Relaying}, MTC~\cite{MTC}, and multiaccess communication in~\cite{MultiAccess}. However, to the best of our knowledge, there are no works apart from \cite{Park2019} in the literature which derives the decoding error for large-scale networks in the finite blocklength regime. In this paper, we derive the average coding rate of large-scale networks in the finite blocklength regime. We also formulate the outage probability of the network in the finite blocklength regime, and derive bounds and a fairly tight approximation on this outage probability. Then, we investigate the effect of network parameters on performance, and demonstrate clearly how the Shannon's capacity expression for the infinite blocklength regime overestimates performance. These results are applicable for studying IoT networks, MTC, etc.

The rest of the paper is organized as follows. In Sec. II, the system model is presented. In Sec. III, the average capacity of a large-scale network in the finite blocklength regime is derived, and the outage probability is formulated. Finally, the effect of network parameters on the capacity is investigated in Sec. IV, and the paper is concluded in Sec. V.

\section{System Model}
We consider a downlink scenario where a serving base station (BS) transmits to a user located within its coverage area, interfered by other non-serving BSs as shown in Fig.~\ref{fig:r_o_r_i}. The channel is modeled as a block-fading Rayleigh channel. The received signal is given by
\begin{equation}\label{eq:system_model}
   y=h_0 \sqrt{P} r_0^{-\eta/2} s_0+I_{gg}+w, 
\end{equation}
where the channel gain $(h_0)$ is circularly symmetric complex Gaussian ($\mathcal{CN}(0,1)$), $P$ is the transmit power, $r_0$ is the distance between the BS and the user, $\eta$ is the path loss exponent, $s_0$ is a codeword symbol with unit power, $I_{gg}$ is the interference from other BSs, and $w$ is $\mathcal{CN}(0,N_0)$. The channel is considered to be a block fading channel model where the channel coefficient ($h_0$) remains constant for a block of L consecutive symbols and changes to an independent realization in the next block. The interference term $I_{gg}$ is the sum of interference signals received from all non-serving BSs and is given by
\begin{equation}
    I_{gg}=\sum_{i=1}^{\infty}\sqrt{P} h_i r_i^{-\eta/2} s_i,
\end{equation}
where $P$ is the transmit power (assumed equal across BSs), $h_i$ is the block-fading channel gain from non-serving $\mathrm{BS}_i$ to the user, $r_i$ is the distance between non-serving $\mathrm{BS}_i$ and the user, and $s_i$ is the codeword symbol transmitted by non-serving $\mathrm{BS}_i$.

 It is worth noting that $r_0$ is random in general, but for simplicity it is assumed to be fixed in this work to study the performance for different system parameters. We assume that there is no interfering (non-serving) BS within a circle of radius $r_0$ about the user, which is known as the interference exclusion region~\cite{sawy}. To study the average performance of such a network over different geometries, the BS locations are often modeled by a repulsive point process (PP)~\cite{sawy}. For tractability, a PPP is commonly considered as an accurate approximation for several types of intractable repulsive PP, and the reader is referred to~\cite{sawy} for more details. Hence, in this work, we assume that the BSs locations follow a PPP with an interference exclusion region of radius $r_0$ and intensity $\lambda$ $\mathrm{BS/km^2}$.

\begin{figure}
\centering
\begin{tikzpicture}[scale = 0.45]
\node (u) at (0,0) {\includegraphics[scale=.05]{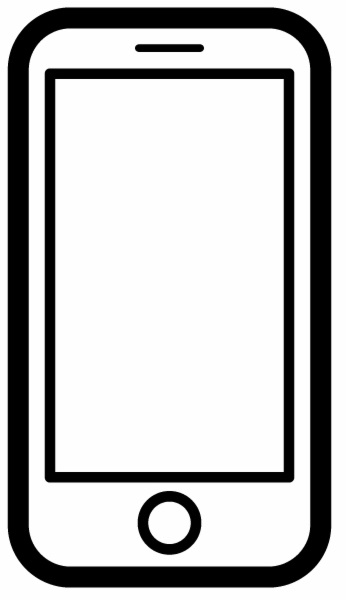}};
\node (bs1) at ($(u)+(7,0)$) {\includegraphics[scale=.016]{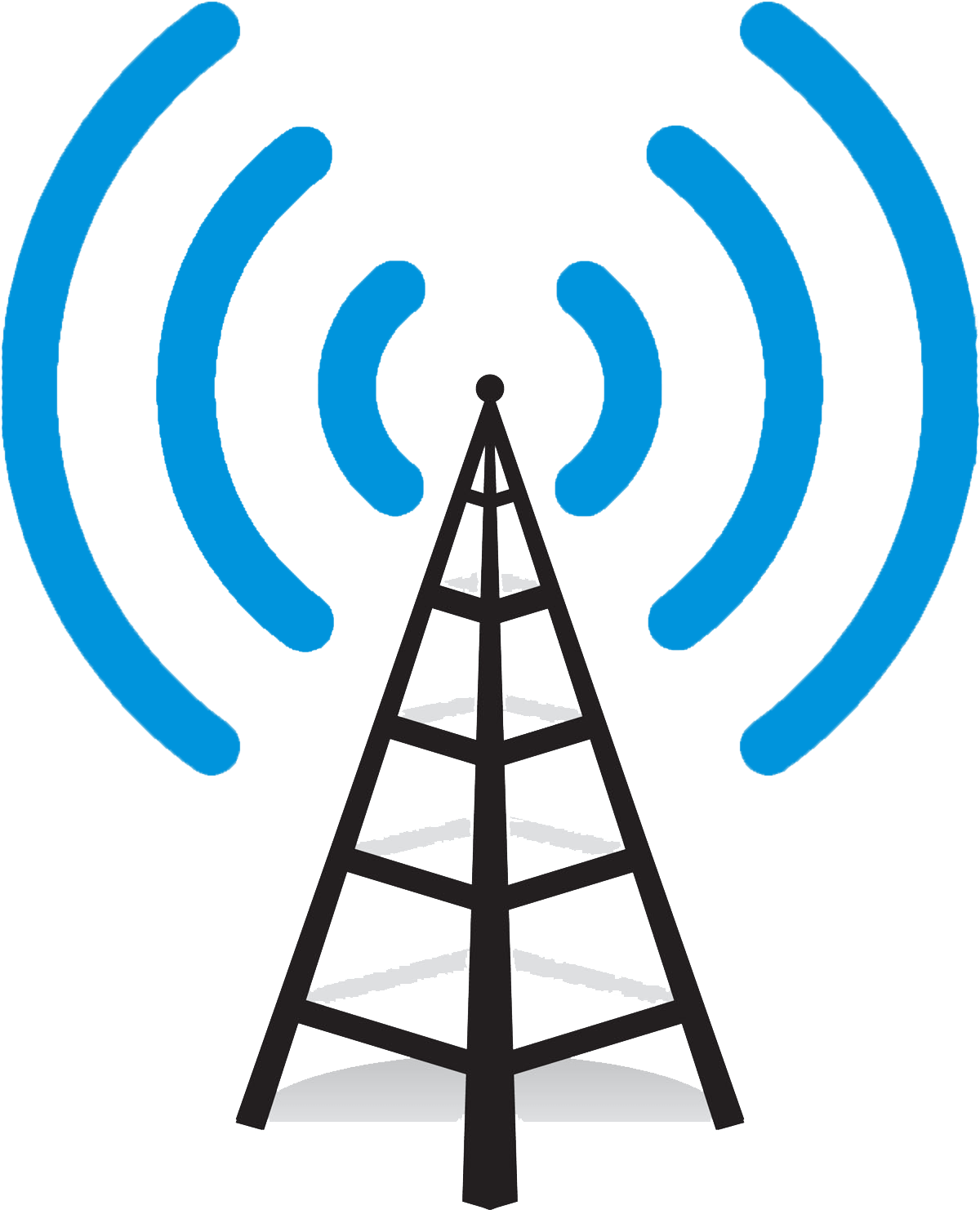}};
\node (bs7) at ($(u)+(-6,3)$) {\includegraphics[scale=.014]{bs.png}};
\node (bs0) at ($(u)+(3.5,1)$) {\includegraphics[scale=.015]{bs.png}};
\node (bs4) at ($(u)+(-3,-4)$) {\includegraphics[scale=.019]{bs.png}};
\node (bs5) at ($(u)+(-7,-4.5)$) {\includegraphics[scale=.02]{bs.png}};
\node (bs8) at ($(u)+(-2,4)$) {\includegraphics[scale=.013]{bs.png}};
\node (bs2) at ($(u)+(5,-4)$) {\includegraphics[scale=.019]{bs.png}};
\node (bs3) at ($(u)+(1,-4.5)$) {\includegraphics[scale=.02]{bs.png}};
\node (bs6) at ($(u)+(-5,-1)$) {\includegraphics[scale=.017]{bs.png}};
\node (bs9) at ($(u)+(3,4)$) {\includegraphics[scale=.013]{bs.png}};
\draw[->,line width=1] (bs0) to node[above] {$r_0$} (u);
\draw[->,line width=1,dashed] (bs1) to node[below] {$r_1$} (u);
\draw[->,line width=1,dashed] (bs2) to node[above] {$r_2$} (u);
\draw[->,line width=1,dashed] (bs3) to node[right] {$r_3$} (u);
\draw[->,line width=1,dashed] (bs4) to node[right] {$r_4$} (u);
\draw[->,line width=1,dashed] (bs5) to node[above] {$r_5$} (u);
\draw[->,line width=1,dashed] (bs6) to node[above] {$r_6$} (u);
\draw[->,line width=1,dashed] (bs7) to node[right] {$r_7$} (u);
\draw[->,line width=1,dashed] (bs8) to node[right] {$r_8$} (u);
\draw[->,line width=1,dashed] (bs9) to node[left] {$r_9$} (u);
\node[circle,red,dashed] (c) at (u) [draw,minimum width = 3cm] {};
\node (t) at ($(u)+(6,4)$) [text width=1.5cm,red] {\footnotesize Interference exclusion region};
\draw[->,red] (t) to (c);
\end{tikzpicture}	
	
	\caption{A realization of a cellular network with exclusion region of $r_0$ between the user of interest and the serving BS, where BSs are randomly distributed with distances $r_i> r_0$ to the user.}
	\label{fig:r_o_r_i}
\end{figure}

Studying the network performance (Capacity, BER, etc.) under this model is rather difficult. To remedy this, an approximation is commonly considered in the literature, wherein the interference is modeled as conditionally Gaussian, conditioned on geometry ($r_0$ and $r_i$)~\cite{sawy}. Thus, the simplified interference representation is a randomly scaled Gaussian given by:
\begin{align}\label{eq:I_eq}
 I_{eq}=\sqrt{\mathcal{B}} {G},
 \end{align}
where ${G}$ is $\mathcal{CN}(0,1)$, and $\mathcal{B}>0$ is the power of interference, and it is a random variable independent of ${G}$ but is dependent on the network geometry, and has the following Laplace transform (LT)
\begin{align}
\label{LapTrans}
    \mathcal{L}_\mathcal{B}(z)=\exp\left\{ \sum_{k=1}^{\infty} a_k z^k\right\},
\end{align}
where the coefficients $a_k$ are given by
\begin{align}
       a_k=(-1)^k 2 \pi \lambda r_0^2 \left(\frac{P}{r_0^\eta}\right)^k \frac{\mathbb{E}\big\{|s_0|^{2k}\big\}}{(\eta k -2)k!}.
\end{align}
For $\eta=4$, the LT expression provided in~\eqref{LapTrans} can be simplified to
\begin{align}\label{eq:laplace_z}
\mathcal{L}_{\mathcal{B}}(z)=\exp\left(-\pi \lambda \sqrt{z P} \arctan\left(\frac{\sqrt{z P}}{r_0^2}\right)\right).
\end{align}
 Using the approximation in~\eqref{eq:I_eq}, the simplified model becomes 
\begin{align}
y=h_0 \sqrt{P} r_0^{-\eta/2} s_0+I_{eq}+w. 
\end{align}

We assume that the serving BS sends information to the user using codewords of length $n$ symbols, and that the codewords have to be decoded correctly with probability $(1-\epsilon)$ where $\epsilon$ is the frame error probability. The goal of this work is to study the average coding rate (in bits per transmission) and outage probability of this network under these considerations. The average coding rate is discussed in the next section.


\section{  Average Coding Rate in the finite blocklength regime} \label{Capacity_proof}
\setlength{\abovedisplayskip}{7pt}
\setlength{\belowdisplayskip}{7pt}
\setlength{\abovedisplayshortskip}{7pt}
\setlength{\belowdisplayshortskip}{7pt}
In this section, the average coding rate in the finite blocklength regime is derived. We denote the average coding rate of the network given blocklength $n$ and frame error rate $\epsilon$ by $\bar{R}_{n,\epsilon}$. Assuming channel knowledge is available at the BSs, the signal to interference and noise ratio $(\gamma)$ is defined as
\begin{align}\label{SINR}
    \gamma&=\frac{P r_0^{-\eta} |h_0|^2}{N_0+\mathcal{B}}= \frac{ |h_0|^2}{\frac{N_0}{P r_0^{-\eta}}+\frac{\mathcal{B}}{P r_0^{-\eta}}}\nonumber\\
    &= \frac{ |h_0|^2}{\gamma_o+\zeta}
\end{align}
where $\gamma_o=\frac{P r_0^{-\eta}}{N_0}$ and  $\zeta=\frac{\mathcal{B}}{P r_0^{-\eta}}$. The average capacity of the large-scale network in the infinite blocklength regime is given by~\cite{sawy} 
\begin{align}
   \bar{C}_{\infty,0}(\gamma_o)&=\mathbb{E}_{h_0,\zeta},\Big\{\log_2(1+\frac{ |h_0|^2}{\gamma_o+\zeta})\Big\},\nonumber
\end{align}

To derive the average coding rate in the finite blocklength regime, i.e., where $n<\infty$ and $\epsilon>0$, we rely on a result in~\cite{BlockFadingPolyankiy} concerning the maximum coding rate of a block fading channel in the finite blocklength regime, which is introduced in the following lemma.

\begin{lemma}{\textsc{(\cite{polyanski})}}\label{lem:poly}
For a block fading channel with a signal-to-noise ratio $\alpha=\frac{P}{N_0}$, blocklength $n$, a target frame error rate $\epsilon$ satisfying $0<\epsilon<0.5$, and channel gain coefficient $H$ , the maximum coding rate is approximated as
\begin{equation}\label{eq:Cap_polyankiy}
    R_{n,\epsilon}(H,\alpha)\approx C_{\infty,0}(H,\alpha)-\frac{\sqrt{V(H,\alpha)} Q^{-1}(\epsilon)}{\sqrt{n}}+o(1/\sqrt{n}),
\end{equation}
where $C_{\infty,0}(H,\alpha) = \log_2(1+|H|^2\alpha)$ is the AWGN capacity in the infinite blocklength regime, $V(H,\alpha)$ is the channel dispersion given by 
\begin{align}
      V(H,\alpha)=\frac{|H|^2\alpha}{2} \frac{|H|^2\alpha+2}{(|H|^2\alpha+1)^2} \log_2^2(e),
\end{align}
and $Q(\cdot)$ is the Q-function.
\end{lemma}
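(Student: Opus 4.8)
\emph{Proof sketch.} The plan is to obtain \eqref{eq:Cap_polyankiy} from the finite-blocklength characterization of the AWGN channel in~\cite{polyanski} by conditioning on the fading realization. Assuming the codeword spans a single fading block, over which $H$ is constant, conditioned on $H=h$ the channel seen by the length-$n$ codeword is an AWGN channel with signal-to-noise ratio $|h|^2\alpha$. Hence the conditional maximum coding rate $R_{n,\epsilon}(h,\alpha)$ coincides with the AWGN maximum coding rate at SNR $|h|^2\alpha$, so it suffices to establish the normal approximation for that channel; substituting the effective SNR $|h|^2\alpha$ then yields $C_{\infty,0}(H,\alpha)=\log_2(1+|H|^2\alpha)$ and the dispersion $V(H,\alpha)$ stated in the lemma.

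For the AWGN channel at SNR $s=|h|^2\alpha$ one argues in two directions. For achievability, use a random code with codewords drawn uniformly on the power sphere together with a threshold decoder, and bound the average error probability by the dependence-testing (DT) bound of~\cite{polyanski}; under the capacity-achieving output distribution the information density is a sum of $n$ i.i.d.\ terms of mean $C_{\infty,0}(h,\alpha)$ and per-symbol variance $V(h,\alpha)$, so the Berry--Esseen theorem gives $\log_2 M\ge nC_{\infty,0}(h,\alpha)-\sqrt{nV(h,\alpha)}\,Q^{-1}(\epsilon)+O(\log n)$. For the converse, apply the meta-converse (binary hypothesis-testing) bound against the same output distribution and again invoke Berry--Esseen to obtain the matching $\log_2 M\le nC_{\infty,0}(h,\alpha)-\sqrt{nV(h,\alpha)}\,Q^{-1}(\epsilon)+O(\log n)$. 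Dividing by $n$ and keeping the first two terms gives \eqref{eq:Cap_polyankiy}, with the dispersion arising as the variance of the single-letter information density of the Gaussian channel, which evaluates to $V(H,\alpha)=\tfrac{|H|^2\alpha}{2}\cdot\tfrac{|H|^2\alpha+2}{(|H|^2\alpha+1)^2}\log_2^2 e$.

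The hypothesis $0<\epsilon<0.5$ guarantees $Q^{-1}(\epsilon)>0$, so the dispersion term is a genuine backoff below capacity. I expect the main obstacle to be controlling the remainder so that the achievability and converse constants agree to order $1/\sqrt n$: this needs the Berry--Esseen bound for the (non-lattice) information-density distribution together with a careful handling of the power constraint --- maximal rather than average power, and spherical ``shell'' codebooks in place of i.i.d.\ Gaussian codewords --- which is precisely the analysis carried out in~\cite{polyanski} and extended to block fading in~\cite{BlockFadingPolyankiy}. Conditioning on $H$ introduces nothing new since the codeword lies within one fading block, so the lemma follows by invoking that analysis with the effective SNR $|H|^2\alpha$.
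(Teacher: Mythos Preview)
Your sketch is correct and matches the paper's treatment: the paper does not prove Lemma~\ref{lem:poly} at all but simply imports it from~\cite{polyanski,BlockFadingPolyankiy}, and what you have written is precisely an outline of the achievability (DT bound) and converse (meta-converse) arguments from those references, specialized by conditioning on the fading realization~$H$. There is nothing to add --- your proposal just unpacks the citation the paper relies on.
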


In what follows, we treat this approximation as the true maximum coding rate, since this approximation is accurate enough for practical values of $n$ as demonstrated in \cite{BlockFadingPolyankiy}. To extend this to the average coding rate expression of the large-scale network in the finite blocklength regime, $|H|^2\alpha$ should be replaced by $\gamma$ defined in~(\ref{SINR}), followed by averaging with respect to $\gamma$, to obtain the average capacity of the large-scale network in the infinite blocklength regime $(\bar{C}_{\infty,0}(\gamma_o))$ and the average channel dispersion $(\bar{V}(\gamma_o))$  which are discussed in the next subsections. This extension is valid since the Gaussian approximation is considered for the aggregate interference.
\vspace{5pt}
\subsection{Average Capacity in the Infinite blocklength Regime for large-scale network}
\vspace{2pt}
The average capacity of the large-scale network in the infinite blocklength regime with channel state information available at the BSs is given by the following lemma~\cite{sawy}.
\begin{lemma}\label{lem:capacity}
For a large-scale network topology with an average power constraint block-fading Rayleigh channel, signal to noise ratio $\gamma_o$ and interference power $\zeta$, the average capacity is given by:
\begin{align}\label{eq:avg_C_network}
\bar{C}_{\infty,0}(\gamma_o)=\int_{0}^{\infty} \exp\Big(-\frac{2^{ c}-1}{\gamma_o}\Big) \mathcal{L}_\zeta\Big\{(2^{c}-1)\Big\} d c
\end{align}
\end{lemma}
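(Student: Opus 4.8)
The plan is to evaluate the expectation defining $\bar{C}_{\infty,0}(\gamma_o)$ by converting it into an integral of a tail probability (the layer-cake / complementary-CDF representation) and then exploiting the two structural properties the model provides: under Rayleigh fading $|h_0|^2$ is exponentially distributed with unit mean, and $\zeta$ is independent of $h_0$. Starting from $\bar{C}_{\infty,0}(\gamma_o)=\mathbb{E}_{h_0,\zeta}\{\log_2(1+\gamma)\}$ with $\gamma$ as in~\eqref{SINR} (so that the additive noise term in the denominator is $N_0/(P r_0^{-\eta})=\gamma_o^{-1}$), the random variable $\log_2(1+\gamma)$ is nonnegative, hence its mean equals $\int_0^\infty \Pr(\log_2(1+\gamma)>c)\,dc$.

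First I would rewrite the event inside the tail probability: for every $c\ge 0$, $\log_2(1+\gamma)>c$ is equivalent to $\gamma>2^c-1$, i.e.\ to $|h_0|^2>(2^c-1)(\gamma_o^{-1}+\zeta)$, where the threshold on the right is nonnegative because $c\ge 0$. Second, I would condition on $\zeta$ and use that $|h_0|^2\sim\mathrm{Exp}(1)$ independently of $\zeta$, so that $\Pr(|h_0|^2>x\mid\zeta)=\exp(-x)$ for $x\ge 0$; this gives
\begin{align}
\Pr(\log_2(1+\gamma)>c\mid\zeta)=\exp\!\left(-\tfrac{2^c-1}{\gamma_o}\right)\exp\!\left(-(2^c-1)\zeta\right).
\end{align}
Third, I would take the expectation over $\zeta$ and recognize $\mathbb{E}_\zeta\{\exp(-(2^c-1)\zeta)\}=\mathcal{L}_\zeta\{2^c-1\}$, obtaining $\Pr(\log_2(1+\gamma)>c)=\exp(-(2^c-1)/\gamma_o)\,\mathcal{L}_\zeta\{2^c-1\}$. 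Substituting this back into the layer-cake integral over $c\in[0,\infty)$ yields exactly~\eqref{eq:avg_C_network}.

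I do not anticipate a genuine obstacle here; the only points that require care are routine measure-theoretic justifications. The interchanges of expectation and integration (both in the layer-cake step and in taking $\mathbb{E}_\zeta$ of the conditional tail probability) are legitimate by Tonelli's theorem since every integrand is nonnegative. Finiteness of the final integral follows because $\zeta\ge 0$ forces $\mathcal{L}_\zeta\{2^c-1\}\le 1$, so the integrand is dominated by $\exp(-(2^c-1)/\gamma_o)$, which is integrable on $[0,\infty)$; this simultaneously shows $\bar{C}_{\infty,0}(\gamma_o)<\infty$. An equivalent derivation uses the identity $\mathbb{E}\{\log_2(1+\gamma)\}=\frac{1}{\ln 2}\int_0^\infty\frac{\Pr(\gamma>t)}{1+t}\,dt$ followed by the change of variables $t=2^c-1$, but the tail-probability route above reaches~\eqref{eq:avg_C_network} most directly.
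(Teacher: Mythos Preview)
Your proposal is correct and follows exactly the standard derivation: the paper does not prove this lemma itself but defers to~\cite{sawy}, where the argument is precisely the complementary-CDF (layer-cake) representation combined with the exponential law of $|h_0|^2$ and the Laplace transform of the interference, just as you outline. Your only deviation is cosmetic---you explicitly note that the noise term in the denominator is $\gamma_o^{-1}$ rather than $\gamma_o$ (a typo in the paper's display~\eqref{SINR}), which is indeed what is needed for~\eqref{eq:avg_C_network} to hold.
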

This result was derived in~\cite{sawy} to which the reader is referred for the proof. Next, we derive the average channel dispersion.

\subsection{Average Channel Dispersion for a Large-Scale Network}
The average channel dispersion for the large-scale network is given by
\begin{align}
  \bar{V}(\gamma_o)= \mathbb{E}\left\{ V(\gamma)\right\}&= \mathbb{E}\left\{\frac{\gamma}{2} \frac{\gamma+2}{(\gamma+1)^2} \log_2^2(e)\right\}\nonumber\\
    &= \mathbb{E}\left\{ \frac{\log_2^2(e)}{2} \left(1-\frac{1}{(\gamma+1)^2}\right)\right\}\nonumber\\
    &=\int_{0}^{\infty} (1-\mathbb{F}_V({v})) d{v}\label{eq:cdf_v}
\end{align}
where $\mathbb{F}_V(v)$ is the cumulative density function (CDF) of the channel dispersion $V(\gamma)$, and the last step follows as an application of Fubini's theorem~\cite{fubini}. The following lemma expresses $\bar{V}(\gamma_o)$.
\begin{lemma}\label{Lemma_EV}
The average channel dispersion $\mathbb{E}\{V(\gamma)\}$ is given by 
\begin{align}\label{eq:avg_V_network}
 \bar{V}(\gamma_o)=\int_{0}^{\frac{1}{2}\log_2^2(e)} \exp\left(-\frac{1}{\gamma_o}\left(\sqrt{\frac{1}{1-\frac{2{v}}{\log_2^2(e)}}}-1\right)\right)\nonumber\\
    \times\mathcal{L}_\zeta\left\{\sqrt{\frac{1}{1-\frac{2{v}}{\log_2^2(e)}}}-1\right\} d{v},
\end{align}
where $\mathcal{L}_{\zeta}\{\cdot\}$ is given in \eqref{LapTrans} and $\zeta=\frac{\mathcal{B}}{P r_0^{-\eta}}$.
\end{lemma}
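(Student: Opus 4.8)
The plan is to turn the complementary-CDF representation \eqref{eq:cdf_v}, namely $\bar V(\gamma_o)=\int_0^\infty\bigl(1-\mathbb F_V(v)\bigr)\,dv$, into the claimed integral by computing $1-\mathbb F_V(v)=\mathbb P\{V(\gamma)>v\}$ in closed form. The crucial structural fact is that the dispersion $V$ is a \emph{monotone} function of the SINR $\gamma$ of \eqref{SINR}; this lets the event $\{V(\gamma)>v\}$ be rewritten as a simple tail event $\{\gamma>g(v)\}$ for $\gamma$, whose probability is then evaluated by exactly the conditioning argument that produces Lemma~\ref{lem:capacity}.

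First I would observe that on $\gamma\in[0,\infty)$ the map $\gamma\mapsto V(\gamma)=\tfrac12\log_2^2(e)\bigl(1-(\gamma+1)^{-2}\bigr)$ is continuous and strictly increasing, with image $[0,\tfrac12\log_2^2(e))$. Two consequences follow at once: the integrand $1-\mathbb F_V(v)$ in \eqref{eq:cdf_v} is supported on $[0,\tfrac12\log_2^2(e))$, which fixes the upper limit of integration in \eqref{eq:avg_V_network}; and on that interval $V(\gamma)>v\iff\gamma>g(v)$, where $g(v)$ is the inverse obtained by solving $v=\tfrac12\log_2^2(e)\bigl(1-(\gamma+1)^{-2}\bigr)$ for $\gamma$, namely $g(v)=\bigl(1-\tfrac{2v}{\log_2^2(e)}\bigr)^{-1/2}-1\ge0$. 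It then remains to compute $\mathbb P\{\gamma>g(v)\}$. Since $|h_0|^2\sim\mathrm{Exp}(1)$ is independent of $\zeta$, conditioning on $\zeta$ and using the exponential tail gives $\mathbb P\{\gamma>x\mid\zeta\}=e^{-x/\gamma_o}e^{-x\zeta}$ for every $x\ge0$, hence $\mathbb P\{\gamma>x\}=e^{-x/\gamma_o}\,\mathcal L_\zeta\{x\}$ — the very identity already underlying Lemma~\ref{lem:capacity}, now applied with $x=g(v)$. Substituting $\mathbb P\{\gamma>g(v)\}=e^{-g(v)/\gamma_o}\,\mathcal L_\zeta\{g(v)\}$ together with the closed form of $g(v)$ back into \eqref{eq:cdf_v} yields \eqref{eq:avg_V_network}.

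I expect no serious obstacle; the points that need a little care are: (i) checking that $V$ is strictly increasing on $[0,\infty)$, so that $g$ is single-valued and $\{V(\gamma)>v\}=\{\gamma>g(v)\}$ holds without exceptional sets, and that $g(v)\ge 0$ on the relevant interval so the exponential tail formula applies; (ii) keeping $\gamma_o$ (a deterministic SNR) separate from $\zeta$ (the only random, geometry-dependent quantity), so that the spatial average enters solely through the Laplace transform $\mathcal L_\zeta$ of \eqref{LapTrans}; and (iii) noting that the exchange of expectation and integral in \eqref{eq:cdf_v} is the Fubini step already invoked, legitimate here because $1-\mathbb F_V$ is nonnegative, bounded and compactly supported. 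In essence the proof simply transports the argument of Lemma~\ref{lem:capacity} from $\log_2(1+\gamma)$ to $V(\gamma)$.
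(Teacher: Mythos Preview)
Your proposal is correct and follows essentially the same route as the paper: both invert the monotone map $\gamma\mapsto V(\gamma)$ to rewrite the tail event $\{V(\gamma)>v\}$ as $\{\gamma>g(v)\}$, then evaluate the latter via the exponential distribution of $|h_0|^2$ conditioned on $\zeta$ and the Laplace transform $\mathcal L_\zeta$, and finally substitute into \eqref{eq:cdf_v}. Your write-up is in fact slightly more careful than the paper's in making the monotonicity and the resulting upper integration limit explicit.
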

\begin{proof}
We start by expressing the CDF of $V(\gamma)$ for a given $\zeta$ as follows:
\begin{align}
   & \mathbb{F}_V(  {v}|\zeta)= \mathbb{P}(V(\gamma)<  {v}|\zeta)\nonumber\\
    &=\mathbb{P}\left(\frac{\log_2^2(e)}{2} \left(1-\frac{1}{(\gamma+1)^2}\right)<   {v}\right)\nonumber\\ 
     &=\mathbb{P}\left(\gamma<\sqrt{\frac{1}{1-\frac{2  {v}}{\log_2^2(e)}}}-1\right)\nonumber\\
     &=\mathbb{P}\left(\frac{|h_0|^2}{\frac{1}{\gamma_o}+\zeta}<\sqrt{\frac{1}{1-\frac{2  {v}}{\log_2^2(e)}}}-1\right)\nonumber\\
     &=\mathbb{P}\left(|h_0|^2<\left(\frac{1}{\gamma_o}+\zeta\right)\left(\sqrt{\frac{1}{1-\frac{2  {v}}{\log_2^2(e)}}}-1\right)\right)\nonumber\\
     &=1-\exp\left(-\left(\frac{1}{\gamma_o}+\zeta\right)\left(\sqrt{\frac{1}{1-\frac{2  {v}}{\log_2^2(e)}}}-1\right)\right),
\end{align}
where the last step follows from the Rayleigh distribution of $h_0$. Averaging with respect to the interference term $\zeta$ yields
\begin{align}
\label{eq:avg_cdf_v}
&\mathbb{E}_\zeta\{\mathbb{F}_V(  {v})\}\nonumber\\
&= \mathbb{E}_\zeta\left\{1-\exp\left(-\left(\frac{1}{\gamma_o}+\zeta\right)\left(\sqrt{\frac{1}{1-\frac{2  {v}}{\log_2^2(e)}}}-1\right)\right)\right\}\nonumber\\
& =1-\exp\left(-\frac{1}{\gamma_o}\left(\sqrt{\frac{1}{1-\frac{2  {v}}{\log_2^2(e)}}}-1\right)\right)\nonumber\\
&\hspace{3.5cm} \times\mathcal{L}_\zeta\left\{\sqrt{\frac{1}{1-\frac{2  {v}}{\log_2^2(e)}}}-1\right\}.
\end{align}

By substituting (\ref{eq:avg_cdf_v}) in (\ref{eq:cdf_v}), we obtain
\begin{multline}
   \bar{V}(\gamma_o)=\int_{0}^{\infty} (1-\mathbb{F}_V(  {v})) d  {v}\\
    =\int_{0}^{\frac{1}{2}\log_2^2(e)} \exp\left(-\frac{1}{\gamma_o}\left(\sqrt{\frac{1}{1-\frac{2  {v}}{\log_2^2(e)}}}-1\right)\right)\\
    \times\mathcal{L}_\zeta\left\{\sqrt{\frac{1}{1-\frac{2  {v}}{\log_2^2(e)}}}-1\right\} d  {v}.
\end{multline}
This proves the statement of the lemma.
\end{proof}
Using Lemma~\ref{lem:poly},\ref{lem:capacity},\ref{Lemma_EV}, we obtain the following theorem which expresses the average coding rate of the large-scale network in the finite blocklength regime.
\begin{theorem}
The average coding rate of the large-scale network modeled by \eqref{eq:system_model} with blocklength $n$, target frame error rate $\epsilon$, and signal-to-noise ratio $\gamma_0=\frac{r_0^{-\eta}P}{N_0}$ is given by
\begin{align}\label{eq:Capacity_FB}
    \bar{R}_{n,\epsilon}(\gamma_o)=\bar{C}_{\infty,0}(\gamma_o)-\frac{\sqrt{\bar{V}(\gamma_o)} Q^{-1}(\epsilon)}{\sqrt{n}}+o(1/\sqrt{n}),
\end{align}
where $\bar{C}_{\infty,0}(\gamma_o)$ and $\bar{V}(\gamma_o)$ are as defined in~(\ref{eq:avg_C_network})~and~(\ref{eq:avg_V_network}), respectively.
\end{theorem}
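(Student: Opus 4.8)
The plan is to lift the single-link characterization of Lemma~\ref{lem:poly} to the large-scale network by identifying the effective serving link as an AWGN channel with a random SNR, and then averaging over the network randomness using Lemmas~\ref{lem:capacity} and~\ref{Lemma_EV}. Under the conditionally-Gaussian interference model \eqref{eq:I_eq}, the link from the serving BS to the user --- conditioned on the geometry (equivalently on $\zeta$) and on $h_0$ --- is an AWGN channel of SNR $\gamma=|h_0|^2/(\gamma_o+\zeta)$ as in \eqref{SINR}. Hence Lemma~\ref{lem:poly} applies directly with $|H|^2\alpha$ replaced by $\gamma$, so that for each realization
\[
R_{n,\epsilon}(\gamma)\approx \log_2(1+\gamma)-\frac{\sqrt{V(\gamma)}\,Q^{-1}(\epsilon)}{\sqrt{n}}+o(1/\sqrt{n}),
\]
with $V(\gamma)=\tfrac{\gamma}{2}\,\tfrac{\gamma+2}{(\gamma+1)^2}\log_2^2(e)$. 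Setting $\bar R_{n,\epsilon}(\gamma_o):=\mathbb{E}_{h_0,\zeta}\{R_{n,\epsilon}(\gamma)\}$ and taking the expectation of both sides then gives the theorem.

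For the first two terms this is immediate. By linearity of expectation the capacity term becomes $\mathbb{E}_{h_0,\zeta}\{\log_2(1+\gamma)\}=\bar C_{\infty,0}(\gamma_o)$, the closed form of which is supplied by Lemma~\ref{lem:capacity}, \eqref{eq:avg_C_network}. For the remainder, I would note that $V(\gamma)\le \tfrac12\log_2^2(e)$ is bounded and that $\mathbb{E}\{\log_2(1+\gamma)\}<\infty$ by Lemma~\ref{lem:capacity}, so a dominated-convergence argument permits interchanging expectation with the asymptotic expansion; the averaged residual therefore remains $o(1/\sqrt n)$.

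The one step that needs care is the dispersion term: averaging produces $\mathbb{E}_{h_0,\zeta}\{\sqrt{V(\gamma)}\}$, whereas the statement carries $\sqrt{\bar V(\gamma_o)}=\sqrt{\mathbb{E}_{h_0,\zeta}\{V(\gamma)\}}$, and since $x\mapsto\sqrt{x}$ is concave these differ in general, with $\mathbb{E}\{\sqrt{V(\gamma)}\}\le\sqrt{\mathbb{E}\{V(\gamma)\}}$ by Jensen's inequality. I would close the argument by making this replacement explicit as an approximation --- consistent with the approximate equality already inherited from Lemma~\ref{lem:poly} --- and by observing that it is tight precisely in the regimes where the underlying Gaussian-interference model is itself accurate, namely dense networks or large exclusion regions, where $\zeta$ concentrates and the relative fluctuation of $\gamma$ is small so that $\mathbb{E}\{\sqrt{V(\gamma)}\}\approx\sqrt{\mathbb{E}\{V(\gamma)\}}$. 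Evaluating $\mathbb{E}\{V(\gamma)\}$ in closed form via Lemma~\ref{Lemma_EV}, \eqref{eq:avg_V_network}, and substituting $\bar C_{\infty,0}(\gamma_o)$ and $\bar V(\gamma_o)$ into the averaged expansion yields \eqref{eq:Capacity_FB}. I expect this Jensen gap to be the main obstacle to a fully rigorous statement; everything else reduces to linearity of expectation together with Lemmas~\ref{lem:capacity} and~\ref{Lemma_EV}.
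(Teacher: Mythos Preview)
Your approach is exactly the paper's: it simply states that the result follows by averaging \eqref{eq:Cap_polyankiy} over $h_0$ and $\mathcal{B}$ and invoking \eqref{eq:avg_C_network} and \eqref{eq:avg_V_network}. Your identification of the Jensen gap between $\mathbb{E}\{\sqrt{V(\gamma)}\}$ and $\sqrt{\mathbb{E}\{V(\gamma)\}}$ is additional care the paper does not supply; it silently makes that replacement, so your proposal is in fact more rigorous than the published argument.
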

\begin{IEEEproof}
The result follows by averaging \eqref{eq:Cap_polyankiy} with respect to $h_0$ and $\mathcal{B}$, and using \eqref{eq:avg_C_network} and \eqref{eq:avg_V_network}.
\end{IEEEproof}

To achieve this rate, the transmitter uses a code of length $n$, and adapts the rate for each transmission block depending on the channel state $h_0$, which is assumed to be known at the transmitter. Note that this reproduces the result in the infinite blocklength regime when $n\to\infty$ since terms vanish as $n\rightarrow \infty$. Next, we discuss the outage probability in the finite blocklength regime.

\section{Outage Probability}
 Outage is defined as the event where the channel capacity is lower than a rate threshold corresponding to the target coding rate. In the infinite blocklength regime, this rate threshold can be converted to an $\mathrm{SINR}$ threshold. The outage probability of a large-scale network in the infinite blocklength regime is given in~\cite{sawy} as
\begin{align}\label{eq:outage}
 \mathcal{O}(r_0,T)=1-\exp\left(-\frac{T N_0 r_0^4}{P}\right) \mathcal{L}_\zeta(T) 
\end{align}
where $T$ is the $\mathrm{SINR}$ threshold on $\gamma$, i.e., an outage occurs when $\gamma$ is less than $T$. The reader is referred to~\cite{sawy} for complete proof.

In the infinite blocklength regime, when outage occurs, the channel is not guaranteed to support reliable communication at the target rate, where reliability is defined in the sense of a vanishingly small error probability. To extend this to the finite blocklength regime, we define outage in the finite blocklength regime as follows. We say that the channel is in outage when it is not guaranteed to support transmission at the target rate at the desired frame error rate $\epsilon$ and blocklength $n$. This occurs when the average coding rate in a finite blocklength regime drops below the rate threshold. 

For a large-scale network in the finite blocklength regime, the outage probability can be calculated using~\eqref{eq:Capacity_FB} as follows. Let $R=\log_2 (1+T)$ be the target rate. Then the outage probability is given by 
\begin{align}
   &\mathcal{O}(r_0,T,n,\epsilon)\nonumber\\
   &=\mathbb{P}(C_{n,\epsilon}(\gamma)<R)\nonumber\\
   &=\mathbb{P}\left(C_{\infty,0}(\gamma_o)-\sqrt{\frac{ V(\gamma) }{n}} Q^{-1}(\epsilon)+o\left(\frac{1}{\sqrt{n}}\right)<R\right).
   \end{align}
Let 
\begin{align}
\label{ab}
    a&= \sqrt{ \frac{\log_2^2(e)}{2 n} } Q^{-1}(\epsilon)  \text{ \ and \ }
    b=o\left(\frac{1}{\sqrt{n}}\right).
\end{align}
Then, we can write
\begin{align}
&\mathcal{O}(r_0,T,n,\epsilon)\nonumber\\
&=\mathbb{P}\left(\log_2 (1+\gamma)-a\sqrt{\left(1-\frac{1}{(1+\gamma)^2}\right)}+b<R\right).
\end{align}
Noting that $\sqrt{\left(1-\frac{1}{(1+\gamma)^2}\right)}$ is in $[0,1]$, we conclude that
\begin{align}
\mathcal{O}_{l}(r_0,T,n,\epsilon)\leq \mathcal{O}(r_0,T,n,\epsilon) \leq \mathcal{O}_{u}(r_0,T,n,\epsilon),
\end{align}
where the lower bound is given by
\begin{align}\label{eq:Outage_LB}
\mathcal{O}_{l}(r_0,T,n,\epsilon) &= \mathbb{P}\left(\log_2 (1+\gamma)+b<R\right)\nonumber\\
&=1-\exp\left(-\frac{(2^{R-b}-1)}{\gamma_o}\right) 
\mathcal{L}_\zeta(2^{R-b}-1),
\end{align}
and the upper bound is given by
\begin{align}
\mathcal{O}_{u}(r_0,T,n,\epsilon)
&=\mathbb{P}\left(\log_2 (1+\gamma)-a+b<R\right)\nonumber\\
&= 1\hspace{-.05cm}-\exp\left(\hspace{-.1cm}-\frac{(2^{R+a-b}-1) }{\gamma_o}\right)\mathcal{L}_\zeta(2^{R+a-b}\hspace{-.1cm}-1).
\end{align}
Neglecting $b=o\left(\frac{1}{\sqrt{n}}\right)$ in~\eqref{eq:Outage_LB} which is small for practical values of $n$ ($n>100$ e.g.), we can see that the outage probability lower bound coincides with the outage probability in the infinite blocklength regime~\eqref{eq:outage}, confirming that the outage probability in the finite blocklength regime is larger than that in the infinite blocklength regime. On the other hand, by observing the term $\sqrt{\left(1-\frac{1}{(1+\gamma)^2}\right)}$, we can see that this term quickly approaches one as $\gamma$ grows. This indicates that the upper bound $\mathcal{O}_{u}(r_0,T,n,\epsilon)$ provides a good approximation for the outage probability when $\gamma$ is reasonably large, as stated next.

\begin{theorem}
The outage probability of the large-scale network modeled by \eqref{eq:system_model} with blocklength $n$, target frame error rate $\epsilon$, and signal-to-noise ratio $\gamma_0=\frac{r_0^{-\eta}P}{N_0}$ can be approximated as 
\begin{align}
\label{eq:outage_app}
\mathcal{O}(r_0,T,n,\epsilon)\approx \mathcal{O}_{u}(r_0,T,n,\epsilon).
\end{align}
\end{theorem}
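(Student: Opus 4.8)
The plan is to begin from the exact outage expression established just above the theorem,
\[
\mathcal{O}(r_0,T,n,\epsilon)=\mathbb{P}\!\left(\log_2(1+\gamma)-a\sqrt{1-\tfrac{1}{(1+\gamma)^2}}+b<R\right),
\]
to pin down the exact outage region in the scalar variable $\gamma$, and then compare it with the two regions that produce $\mathcal{O}_l$ and $\mathcal{O}_u$. Writing $g(\gamma)=\sqrt{1-1/(1+\gamma)^2}$, this function is continuous and strictly increasing on $[0,\infty)$ with $g(0)=0$ and $g(\gamma)\uparrow 1$; hence, for the small values of $a$ relevant here, $\gamma\mapsto\log_2(1+\gamma)-a\,g(\gamma)+b$ is strictly increasing, so the outage event is an interval $\{\gamma<\gamma^\star\}$ with $\gamma^\star$ the unique root of $\log_2(1+\gamma^\star)-a\,g(\gamma^\star)+b=R$, i.e.\ the fixed point $\gamma^\star=2^{\,R+a\,g(\gamma^\star)-b}-1$.

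Since $0\le g\le 1$, this fixed point is squeezed into $2^{R-b}-1\le\gamma^\star\le 2^{R+a-b}-1$. Using that $\gamma$ has CDF $F_\gamma(x)=1-e^{-x/\gamma_o}\mathcal{L}_\zeta(x)$ (the form appearing in \eqref{eq:outage} and \eqref{eq:Outage_LB}) and that $F_\gamma$ is non-decreasing, this reproduces the sandwich $\mathcal{O}_l\le\mathcal{O}=F_\gamma(\gamma^\star)\le\mathcal{O}_u$. The approximation $\mathcal{O}\approx\mathcal{O}_u$ is then obtained by arguing that $\gamma^\star$ in fact sits near the \emph{upper} end of that interval: in any regime of practical interest the target SINR threshold $T=2^R-1$ is not small, so $g$ is already essentially $1$ throughout the (very narrow) window $[2^{R-b}-1,\,2^{R+a-b}-1]$ — for instance $g(3)\approx 0.97$ and $g(7)\approx 0.99$ — whence $a\,g(\gamma^\star)\approx a$ and $\gamma^\star\approx 2^{R+a-b}-1$.

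To quantify the error I would bound it by the width of the sandwich, $0\le\mathcal{O}_u-\mathcal{O}\le\mathcal{O}_u-\mathcal{O}_l=F_\gamma(2^{R+a-b}-1)-F_\gamma(2^{R-b}-1)$, and then invoke the mean value theorem together with boundedness of the density of $\gamma$: the interval length is $2^{R-b}\big(2^{a}-1\big)$, and $2^{a}-1=O(a)=O\!\big(Q^{-1}(\epsilon)/\sqrt{n}\big)$ by the definition of $a$ in \eqref{ab}. Hence the gap vanishes at rate $1/\sqrt{n}$, of the same order as the unspecified residual $b=o(1/\sqrt{n})$ already carried through from Lemma~\ref{lem:poly}; retaining the factor $g(\gamma)$ would therefore not improve accuracy beyond what the underlying normal approximation guarantees. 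As in the derivation of $\mathcal{O}_l$ and $\mathcal{O}_u$, I would keep $b$ symbolically throughout and only note at the very end that it is negligible for the practical range $n>100$.

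The step I expect to be the real obstacle is that the claim is genuinely an approximation, not an identity: the chain of inequalities gives $\mathcal{O}\le\mathcal{O}_u$ exactly, but the reverse direction rests entirely on the saturation $g(\gamma)\approx 1$ near the threshold, which is a statement about the operating regime (moderate-to-large $\gamma$, practical blocklengths) rather than a universal bound. Making the heuristic ``close enough'' precise thus requires committing to such a regime and, ideally, corroborating it numerically against the exact outage — which the simulations in Sec.~IV are intended to supply.
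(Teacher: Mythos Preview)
Your proposal is correct and rests on exactly the same idea as the paper's proof: replacing $g(\gamma)=\sqrt{1-1/(1+\gamma)^2}$ by $1$, justified by the fact that $g$ saturates quickly for moderate-to-large $\gamma$. The paper's proof is literally that one sentence, so your version --- with the monotone-threshold reformulation, the explicit sandwich $F_\gamma(2^{R-b}-1)\le\mathcal{O}\le F_\gamma(2^{R+a-b}-1)$, and the $O(1/\sqrt{n})$ gap estimate via the mean value theorem --- is a considerably more careful elaboration of the same heuristic rather than a different argument (one small caveat: $\log_2(1+\gamma)-a\,g(\gamma)$ is not strictly increasing at $\gamma=0^+$ since $g'(0^+)=\infty$, but the crossing of level $R>b$ is still unique, so your conclusion survives).
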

\begin{IEEEproof}
This follows by approximating $\sqrt{\left(1-\frac{1}{(1+\gamma)^2}\right)}$ by $1$ which is a good approximation for moderate/large values of $\gamma$.
\end{IEEEproof}

Next, the average coding rate and the outage probability expressions are evaluated for different system parameters.

\section{Numerical Results}
Four main parameters affect the average coding rate of the large-scale network: The distance between the user and the serving BS ($r_0$) which also determines the interference exclusion region, the density of BSs per square kilometer $\lambda$, the blocklength $n$, and the target frame error probability $\epsilon$. The effect of these parameters on the average coding rate and the outage probability is discussed next.

The blocklength $n$ depends on the application scenario. Recent IoT applications use short packets in the range of 512 bits up to 4096 bits according to~\cite{IOT}. Moreover, the URLLC systems require low frame error probabilities ($\epsilon$). The average coding rate of the network is investigated for a range of $n$ and $\epsilon$ in Fig.~\ref{fig:n}, where $n$ takes values from the set $\{128,\ 2048\}$ and $\epsilon$ takes values from the set $\{ 10^{-2},\ 10^{-8}\}$. As shown in Fig.~\ref{fig:n}, at a specific target frame error probability, the average coding rate increases as the $n$ increases moving towards the average capacity in the infinite blocklength regime. However, the average coding rate decreases as the target frame error probability decreases. Moreover, the impact of $n$ becomes more significant if the target frame error probability is small, which is the requirement of URLLC which makes the expression more practical. 

Fig.~\ref{fig:Outage_App} shows the outage probability verses $r_0$ at (a) $n=128$ and (b) $n=2048$ and $\lambda=1$ and $9\ \mathrm{BS/km^2}$, with $\epsilon=10^{-2}$ and $10^{-8}$. The figure shows the Monte Carlo simulated outage probability, the approximation in \eqref{eq:outage_app}, and the outage probability in the infinite blocklength regime \eqref{eq:outage}. It shows that the approximate outage probability expression in \eqref{eq:outage_app} is fairly accurate, providing a convenient approximation for broad range of $n,\lambda,\epsilon$. The figure shows that the outage probability increases as the target frame error probability decreases because the term $\sqrt{\frac{V}{n}}Q^{-1}(\epsilon)$ increases as the $\epsilon$ decreases. Hence, the average coding rate decreases and outage probability increases. However, at a specific $\epsilon$, as the blocklength increases, the outage probability decreases towards the outage of the infinite blocklength regime.

Both figures \ref{fig:n} and \ref{fig:Outage_App} show how the results based on the capacity expression for the infinite blocklength regime overestimate performance, whereas the average coding rate and outage probability results in this paper provide an accurate estimate.  

\begin{figure}
    \centering
    \includegraphics[width=\linewidth]{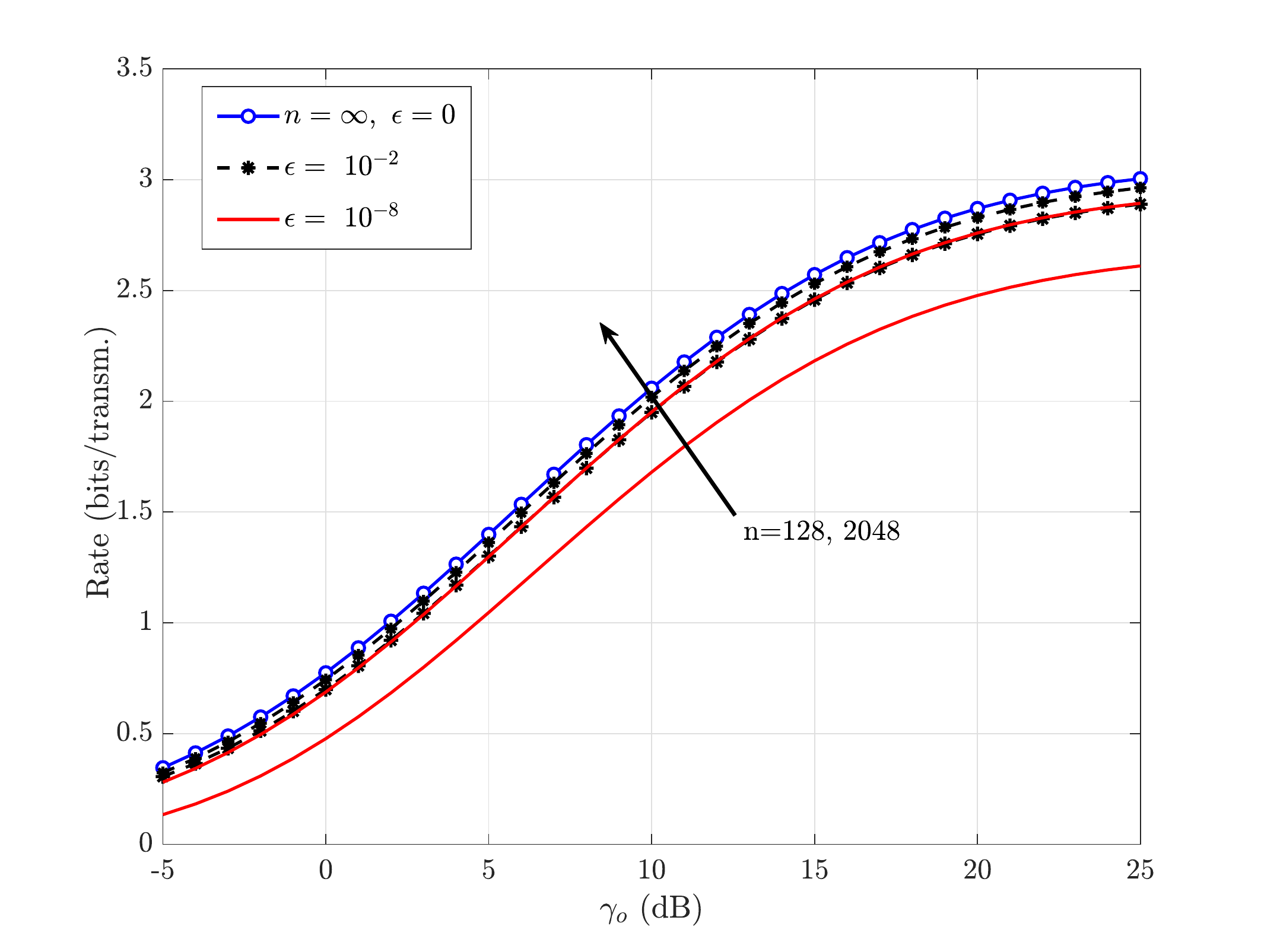}
    \caption{The average coding rate of the large-scale network vs. $\gamma_o$ for blocklengths $n$ where $n=\ 128,\ 2048$ at frame error probabilities $\epsilon=10^{-2},\ 10^{-8}$ with $\lambda=1\ \mathrm{BS/Km^2}$ and $r_0=250\ m$.}
    \label{fig:n}
\end{figure}

\begin{figure}
     \centering
     \begin{subfigure}[b]{0.5\textwidth}
         \centering
         \includegraphics[width=\linewidth]{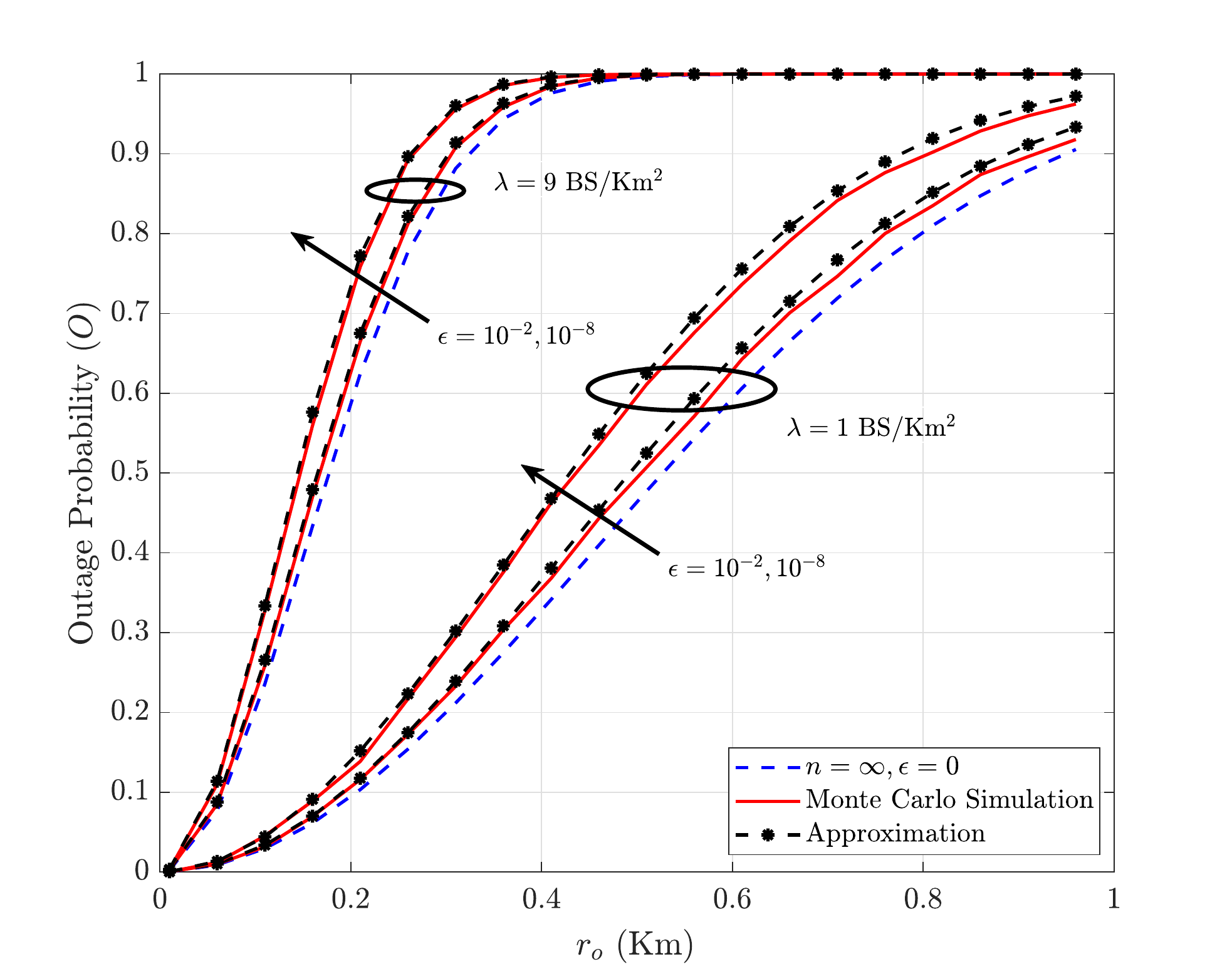}
         \caption{$n=128$}
         \label{fig:Outage_App_128}
     \end{subfigure}
     \hfill
     \begin{subfigure}[b]{0.5\textwidth}
         \centering
         \includegraphics[width=\linewidth]{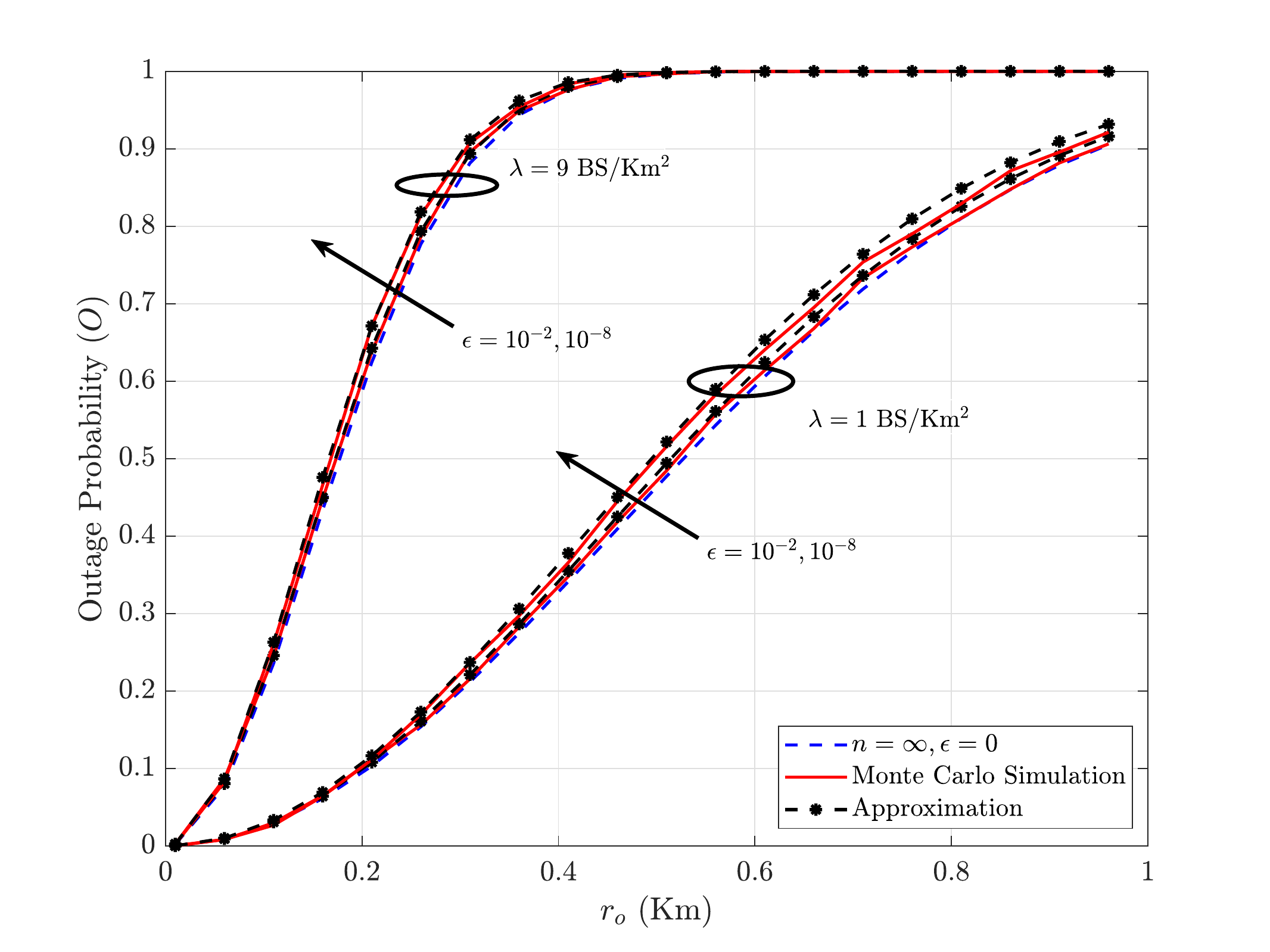}
         \caption{$n=2048$}
         \label{fig:Outage_App_2048}
     \end{subfigure}
     \caption{The outage probability of the large-scale network vs. $r_0$ at $\lambda=1,9\ \mathrm{BS/km^2}$, $\epsilon=10^{-2},10^{-8}$, and (a) $n=128$ (b) $n=2048$ }
     \label{fig:Outage_App}
\end{figure}


The effect of $\lambda$ and $r_0$ on average coding rate is shown in Fig. \ref{fig:r_o}, which shows the average coding rate at different BSs densities $\lambda=1,\ 3,\ 9\ \mathrm{BS/km^2}$ at $n=128$ and $\epsilon=10^{-5}$. The average coding rate decreases as $\lambda$ increases since increasing $\lambda$ increases interference. Moreover, at a specific value of $\lambda$, the average coding rate also decreases as $r_0$ increases since the strength of the desired signal decreases.

\begin{figure}
    \centering
    \includegraphics[width=\linewidth]{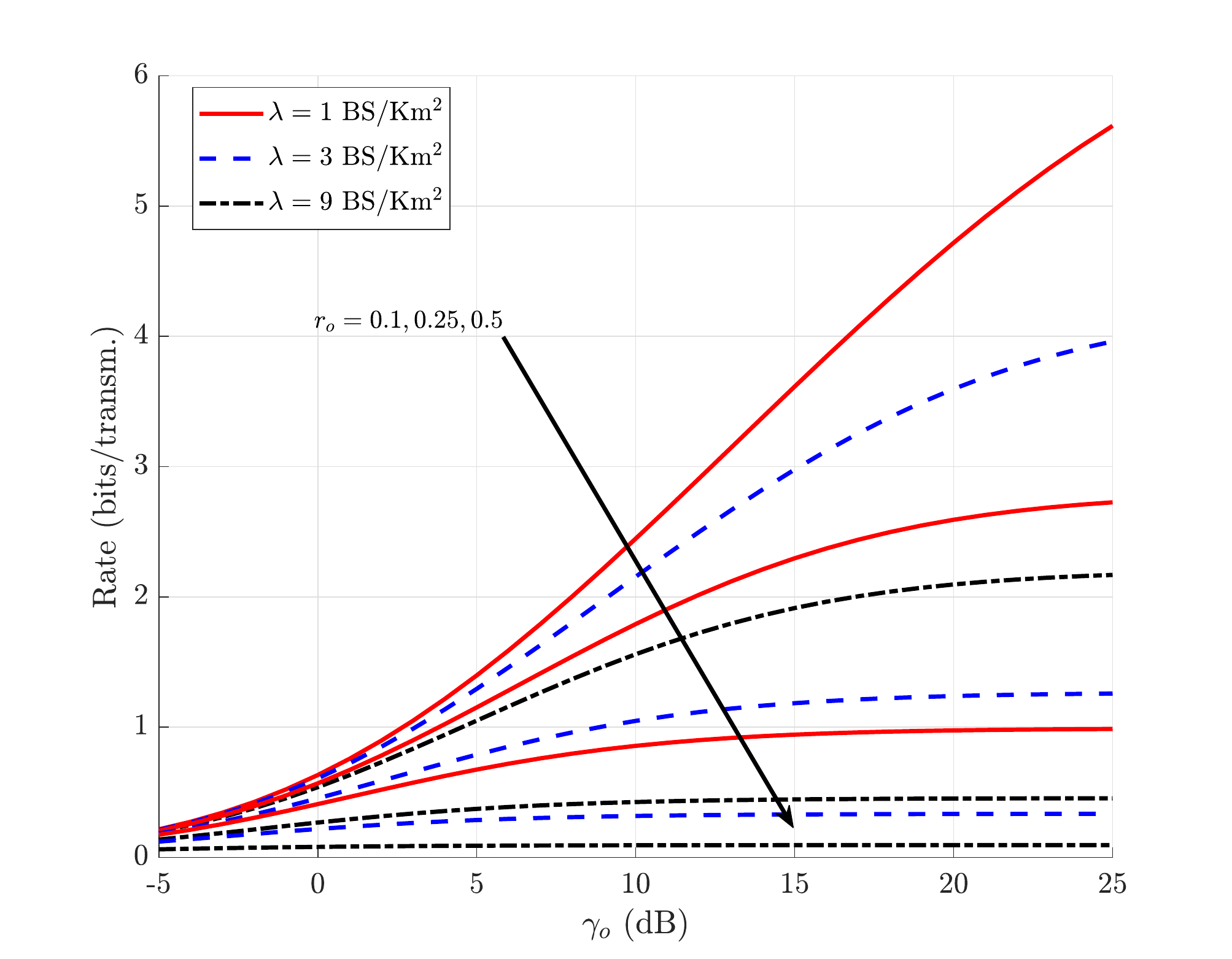}
    \caption{The average coding rate of the large-scale network vs. $\gamma_o$ for $r_0= 10,\ 250,\ 500$ and $\lambda=1,\ 3,\ 9\ \mathrm{BS/km^2}$ with $n=128$ and $\epsilon=10^{-5}$.}
    \label{fig:r_o}
\end{figure}

\section{Conclusion}
We have studied the average coding rate in the finite blocklength regime for a large-scale network using stochastic geometry, and provided a valid approximation for the outage probability of the system. Moreover, we evaluated the performance metrics as a function of the network parameters. The results show that existing results that use the capacity expression in the infinite blocklength regime for studying performance overestimate performance, especially when the blocklength is small or the desired frame error rate is small, which are important requirements of URLLC, IoT, and MTC. The provided expressions can be used to study large-scale networks in the presence of stringent delay or reliability constraints, such as IoT networks, MTC, etc. This work will further be extended to specific communication technologies like OMA, NOMA,..., etc.

\bibliographystyle{elsarticle-num}
\bibliography{Bibliography}
\end{document}